\newcommand{\N}{\mathbb{N}}
\newcommand{\Z}{\mathbb{Z}}
\newcommand{\bool}{\ensuremath{\mathsf{bool}}}
\newcommand{\encode}{\ensuremath{\mathsf{encode}}}
\newcommand{\decode}{\ensuremath{\mathsf{decode}}}
\newcommand{\option}{\ensuremath{\mathsf{option}}}
\newcommand{\some}{\ensuremath{\mathsf{some}}}
\newcommand{\None}{\ensuremath{\mathsf{none}}}
\newcommand{\Part}{\ensuremath{\mathsf{part}}}
\newcommand{\ls}{\ensuremath{\mathsf{list}}}
\newcommand{\inl}{\ensuremath{\mathsf{inl}}}
\newcommand{\inr}{\ensuremath{\mathsf{inr}}}
\newcommand{\ifz}{\ensuremath{\mathsf{ifz}}}
\newcommand{\eval}{\ensuremath{\mathsf{eval}}}
\newcommand{\curry}{\ensuremath{\mathsf{curry}}}
\newcommand{\encodable}{\ensuremath{\mathsf{encodable}}}
\newcommand{\primcodable}{\ensuremath{\mathsf{primcodable}}}
\newcommand{\code}{\ensuremath{\mathsf{code}}}
\newcommand\pto{\mathrel{\ooalign{\hfil$\mapstochar\mkern5mu$\hfil\cr$\to$\cr}}}
\definecolor{keywordcolor}{rgb}{0.7, 0.1, 0.1}   
\definecolor{commentcolor}{rgb}{0.4, 0.4, 0.4}   
\definecolor{symbolcolor}{rgb}{0.0, 0.1, 0.6}    
\definecolor{sortcolor}{rgb}{0.1, 0.5, 0.1}      
\title{Formalizing computability theory via partial recursive functions}
\titlerunning{Formalizing computability theory}
\author{Mario Carneiro}{Carnegie Mellon University, Pittsburgh, PA, USA}{mcarneir@andrew.cmu.edu}{https://orcid.org/0000-0002-0470-5249}{}
\authorrunning{M. Carneiro}
\keywords{Lean, computability, halting problem, primitive recursion}
\begin{document}

\maketitle

\begin{abstract}
We present an extension to the \texttt{mathlib} library of the Lean theorem prover formalizing the foundations of computability theory. We use primitive recursive functions and partial recursive functions as the main objects of study, and we use a constructive encoding of partial functions such that they are executable when the programs in question provably halt. Main theorems include the construction of a universal partial recursive function and a proof of the undecidability of the halting problem. Type class inference provides a transparent way to supply G\"{o}del numberings where needed and encapsulate the encoding details.
\end{abstract}

\section{Introduction}

Computability theory is the study of the limitations of computers, first brought into focus in the 1930s by Alan Turing by his discoveries on the existence of universal Turing machines and the unsolvability of the halting problem \cite{turing1937}, and Alonso Church with the $\lambda$-calculus as a model of computation \cite{church1936}. Together with Kleene's $\mu$-recursive functions \cite{kleene1943}, that these all give the same collection of ``computable functions'' gave credence to the thesis \cite{church1936} that this is the ``right'' notion of computation, and that all others are equivalent in power. Today, this work lies at the basis of programming language semantics and the mathematical analysis of computers.

Like many areas of mathematics, computability theory remains somewhat ``formally ambiguous'' about its foundations, in the sense that most theorems and proofs can be stated with respect to a number of different concretizations of the ideas in play. This can be considered a feature of informal mathematics, because it allows us to focus on the essential aspects without getting caught up in details which are more an artifact of the encoding than aspects that are relevant to the theory itself, but it is one of the harder things to deal with as a formalizer, because definitions must be made relative to \emph{some} encoding, and this colors the rest of the development.

In computability theory, we have three or four competing formulations of ``computable,'' which are all equivalent, but each present their own view on the concept. As a pragmatic matter, Turing machines have become the de facto standard formulation of computable functions, but they are also notorious for requiring a lot of tedious encoding in order to get the theory off the ground, to the extent that the term ``Turing tarpit'' is now used for languages in which ``everything is possible but nothing of interest is easy.'' \cite{perlis1982} Asperti and Riccioti \cite{asperti2012} have formalized the construction of a universal Turing machine in Matita, but the encoding details make the process long and arduous. Norrish \cite{norrish2011} uses the lambda calculus in HOL4, which is cleaner but still requires some complications with respect to the handling of partiality and type dependence.

Instead, we build our theory on Kleene's theory of $\mu$-recursive functions. In this theory, we have a collection of functions $\N^k\to \N$, in which we can perform basic operations on $\N$, as well as recursive constructions on the natural number arguments. This produces the primitive recursive functions, and adding an unbounded recursion operator $\mu x. P(x)$ gives these functions the same expressive power as Turing-computable functions. We hope to show that the ``main result'' here, the existence of a universal machine, is easiest to achieve over the partial recursive functions, avoiding the complications of explicit substitution in the $\lambda$-calculus and encoding tricks in Turing Machines, and moreover that the usage of typeclasses for G\"{o}del numbering provides a rich and flexible language for discussing computability over arbitrary types.

This theory has been developed in the Lean theorem prover, a relatively young proof assistant based on dependent type theory with inductive types, written primarily by Leonardo de Moura at Microsoft Research \cite{demoura2015}. The full development is available in the \texttt{mathlib} standard library (see the \hyperref[sec:supplemental]{Supplemental Material}). In Section \ref{sec:encodable} we describe our extensible approach to G\"{o}del numbering, in Section \ref{sec:primrec} we look at primitive recursive functions, extended to partial recursive functions in Section \ref{sec:partrec}. Section \ref{sec:universal} deals with the universal partial recursive function and its properties, including its application to unsolvability of the halting problem.

\section{Encodable sets}\label{sec:encodable}
As mentioned in the introduction, we would like to support some level of formal ambiguity when encoding problems, such as defining languages as subsets of $\N$ vs. subsets of $\{0,1\}^*$, or even $\Sigma^*$ where $\Sigma$ is some finite or countable alphabet. Similarly, we would like to talk about primitive recursive functions of type $\Z\times\Z\to\Z$, or the partial recursive function $\eval:\code\times\N\pto\N$ that evaluates a partial function specified by a code (see Section \ref{sec:universal}).

Unfortunately it is not enough just to know that these types are countable. While the exact bijection to $\N$ is not so important, it is important that we not use one bijection in a proof and a different bijection in the next proof, because these differ by an automorphism of $\N$ which may not be computable. (For example, if we encode the halting Turing machines as even numbers and the non-halting ones as odd numbers, and then the halting problem becomes trivial.) In complexity theory it becomes even more important that these bijections are ``simple'' and do not smuggle in any additional computational power.

To support these uses, we make use of Lean's typeclass resolution mechanism, which is a way of inferring structure on types in a syntax-directed way. The major advantage of this approach is that it allows us to fix a uniform encoding that we can then apply to all types constructed from a few basic building blocks, which avoids the multiple encoding problem, and still lets us use the types we would like to (or even construct new types like $\mathsf{code}$ whose explicit structure reflects the inductive construction of partial recursive functions, rather than the encoding details).

\begin{figure}
\begin{center}
\begin{tabular}{c|ccccc}
 & 0 & 1 & 2 & 3 & \dots \\\hline
0 & 0 & 1 & 4 & 9 &\\
1 & 2 & 3 & 5 & 10 &\\
2 & 6 & 7 & 8 & 11 &\\
3 & 12 & 13 & 14 & 15& \\
$\vdots$ & & & & & $\ddots$
\end{tabular}
\end{center}
\caption{The pairing function \lstinline"mkpair a b = if a < b then b*b + a else a*a + a + b".}
\label{fig:mkpair}
\end{figure}

At the core of this is the function $\mathsf{mkpair}:\N\times\N\to\N$, and its inverse $\mathsf{unpair}:\N\to\N\times\N$ forming a bijection (see Figure \ref{fig:mkpair}). There is very little we need about these functions except their definability, and that $\mathsf{mkpair}$ and the two components of $\mathsf{unpair}$ are primitive recursive.

\begin{figure}
\centering
\begin{lstlisting}
class encodable (α : Type u) :=
(encode : α → nat)
(decode : nat → option α)
(encodek : ∀ a, decode (encode a) = some a)

variables {α β} [encodable α] [encodable β] 
def encode_sum : α ⊕ β → ℕ
| (inl a) := 2 * encode a
| (inr b) := 2 * encode b + 1

def encode_prod : α × β → ℕ
| (a, b) := mkpair (encode a) (encode b)

def encode_option : option α → ℕ
| none     := 0
| (some a) := succ (encode a)

\end{lstlisting}
\caption{The \encodable\ typeclass, and some example definitions of the encoding functions for the disjoint sum and product operators on types. (The corresponding $\decode$ functions are omitted.)}
\label{fig:encodable}
\end{figure}

We say that a type $\alpha$ is \emph{encodable} if we have a function $\encode:\alpha\to\N$, and a partial inverse $\decode:\N\to\option\;\alpha$ which correctly decodes any value in the image of $\encode$. Here $\option\;\alpha$ is the type consisting of the elements $\some\;a$ for $a:\alpha$, and an extra element \None\ representing failure or undefinedness. If the $\decode$ function happens to be total (that is, never returns \None), then $\alpha$ is called \emph{denumerable}. Importantly, these notions are ``data'' in the sense that they impose additional structure on the type -- there are nonequivalent ways for a type to be \encodable, and we will want these properties to be inferred in a consistent way. (This definition does not originate with us; Lean has had the \encodable\ typeclass almost since the beginning, and MathComp has a similar class, called \textsf{Countable}.)

Classically, an \encodable\ instance on $\alpha$ is just an injection to $\N$, and a \textsf{denumerable} instance is just a bijection to $\N$. But constructively these are not equivalent, and since these notions lie in the executable fragment of Lean (they don't use any classical axioms), one can actually run these encoding functions on concrete values of the types, i.e. we can evaluate $\encode\;(\some\;(2, 3)) = 12$.

\section{Primitive recursive functions}\label{sec:primrec}
\begin{figure}
\begin{lstlisting}
inductive primrec : (ℕ → ℕ) → Prop
| zero : primrec (λ n, 0)
| succ : primrec succ
| left : primrec (λ n, fst (unpair n))
| right : primrec (λ n, snd (unpair n))
| pair {f g} : primrec f → primrec g →
  primrec (λ n, mkpair (f n) (g n))
| comp {f g} : primrec f → primrec g →
  primrec (f ∘ g)
| prec {f g} : primrec f → primrec g →
  primrec (unpaired (λ z n, nat.rec_on n (f z)
     (λ y IH, g (mkpair z (mkpair y IH)))))
\end{lstlisting}
\caption{The definition of primitive recursive on $\N$ in Lean. The \textsf{unpaired} function turns a function $\N\to\N\to\N$ into $\N\to\N$ by composing with \textsf{unpair}, and $\mathsf{nat.rec\_on}:\N\to\alpha\to(\N\to \alpha\to\alpha)\to\alpha$ is Lean's built-in recursor for $\N$.}
\label{fig:primrec}
\end{figure}
The traditional definition of primitive recursive functions looks something like this:
\begin{definition}\label{def:primrec}
The primitive recursive functions are the least subset of functions $\N^k\to\N$  satisfying the following conditions:
\begin{itemize}
\item The function $n\mapsto 0$ is prim.\ rec.
\item The function $n\mapsto n+1$ is prim.\ rec.
\item The function $(n_0,\dots,n_{k-1})\mapsto n_i$ is prim.\ rec. for each $0\le i<k$.
\item If $f:\N^k\to\N$ and $g_i:\N^m\to\N$ for $i\le k$ are prim.\ rec., then so is the $n$-way composition $v\mapsto f(g_0(v),\dots,g_{k-1}(v))$.
\item If $f:\N^m\to\N$ and $g:\N^{m+2}\to\N$ are prim.\ rec., then the function $h:\N^{m+1}\to\N$ defined by
\begin{ceqn}
\begin{align*}
h(\vec z,0)&=f(\vec z)\\
h(\vec z,n+1)&=g(\vec z,n,h(\vec z, n))
\end{align*}
\end{ceqn}
is also prim.\ rec.
\end{itemize}
\end{definition}
CIC is quite good at expressing these kinds of constructions as inductively defined predicates. See Figure \ref{fig:primrec} for the definition that appears in Lean. But there is an important difference in this formulation: rather than dealing with $n$-ary functions, we utilize the pairing function on $\N$ to write everything as a function $\N\to\N$ with only one argument. This drastically simplifies the composition rule to just the usual function composition, and in the primitive recursion rule we need only one auxiliary parameter $z:\N$ rather than $\vec z:\N^m$. Then the projection functions are replaced with the \textsf{left} and \textsf{right} cases for the components of $\mathsf{unpair}:\N\to\N\times\N$, and in order to express composition with higher arity functions, we need the \textsf{pair} constructor to explicitly form the map $x\mapsto(f\;x,g\;x)$. (See Section \ref{sec:textbook} if you think this definition is a cheat.)

Now that we have a definition of `primitive recursive' that works for functions on $\N$, we would like to extend it to other types using the \encodable\ mechanism discussed in Section \ref{sec:encodable}. There is a problem though, because given an arbitrary \encodable\ instance we can combine the $\decode:\N\to\option\;\alpha$ with the function $\encode:\option\;\alpha\to\N$ defined on $\option\;\alpha$ induced by this \encodable\ instance to form a new function $\encode\circ\decode:\N\to\N$, which may or may not be primitive recursive. If it is not, then it brings new power to the primitive recursive functions and so it is not a pure translation of \textsf{primrec} to other types. To resolve this, we define $\mathsf{primcodable}\;\alpha$ to mean exactly that $\alpha$ has an \encodable\ instance for which this composition is primitive recursive. All of the \encodable\ constructions we have discussed (indeed, all those defined in Lean) are \primcodable, so this is not a severe restriction.

Now we can say that a function between arbitrary \textsf{primcodable} types is primitive recursive if when we pass $f$ through the $\encode$ and $\decode$ functions we get a primitive recursive function on $\N$:
\begin{lstlisting}
def primrec {α β} [primcodable α] [primcodable β] (f : α → β) : Prop :=
nat.primrec (λ n, encode (option.map f (decode α n)))
\end{lstlisting}
\emph{Note:} The function $\mathsf{option.map}$ lifts $f$ to a function on option types before applying it to $\decode$. The result has type $\option\;\beta$, which has an \encode\ function because $\beta$ does.

Now we are in a position to recover the textbook definition of primitive recursive, because $\N^k$ is \primcodable, so we have the language to say that $f:\N^k\to\N$ is primitive recursive, and indeed this is equivalent to Definition \ref{def:primrec}.

But we can now say much more: the $\some:\alpha\to\option\;\alpha$ function is primitive recursive because it is just encoded as $\mathsf{succ}$. The constant function $\lambda a. b:\alpha\to\beta$ is primitive recursive because it encodes to some constant function (composed with a function that filters out values not in the domain $\alpha$). The composition of prim.\ rec.\ functions on arbitrary types is prim.\ rec. The pair of primitive recursive functions $\lambda a.\,(f\;a,g\;a)$, where $f:\alpha\to\beta$ and $g:\alpha\to\gamma$, is primitive recursive.

Indeed all the usual basic operations on inductive types like \textsf{sum}, \textsf{prod}, and \textsf{option} are primitive recursive. We define convenient syntax $\mathsf{primrec}_2$ for prim.\ rec.\ binary functions $\alpha\to\beta\to\gamma$ (a common case), expressed by uncurrying to $\alpha\times\beta\to\gamma$, and $\mathsf{primrec\_pred}$ for primitive recursive predicates $\alpha\to\mathsf{Prop}$, which are decidable predicates which are primitive recursive when coerced to $\bool$ (which is \encodable).

The big caveat comes in theorems like the following:
\begin{quote}
If $\alpha$ and $\beta$ are \primcodable\ types and $f:\alpha\to\beta$ and $g:\alpha\to\N\to\beta\to\beta$ are prim.\ rec., then the function $h:\alpha\to\N\to\beta$ defined by
\begin{ceqn}
\begin{align*}
h\;a\;0&=f\;a\\
h\;a\;(n+1)&=g\;a\;n\;(h\;a\;n)
\end{align*}
\end{ceqn} is also prim.\ rec.
\end{quote}
This is of course just the generalization of the primitive recursion clause to arbitrary types, but it requires that the target type be \primcodable, which means in particular that it is countable, so we cannot define an object of function type by recursion. (The universal partial recursive function will give us a way to get around this later.) But this is in some sense ``working as intended,'' since this is exactly why the Ackermann function
\begin{ceqn}
\begin{align*}
A(0,n) &= n+1\\
A(m+1,0) &= A(m,1)\\
A(m+1,n+1) &= A(m,A(m+1,n))
\end{align*}
\end{ceqn}
is not primitive recursive.

Another restriction placed on us relative to Lean's built-in notion of primitive recursion on $\N$ is that that while $\mathsf{nat.rec\_on}$ has a dependent type, we have no mechanism for supporting dependent types via \encodable. We follow the tradition of HOL based provers here and encode dependencies using $\option$ types so we can fail on a garbage input. However, it is possible to support a dependent family via a separate typeclass. For example we could define $\primcodable_2\;F$, where $F:\alpha\to\mathsf{Type}$ and $\alpha$ is \encodable, to mean that $\Pi a, \encodable\;(F\;a)$, and moreover this family of $\encode$/$\decode$ functions is prim.\ rec.\ jointly in both arguments. In the end we did not pursue this because of the added complexity and lack of compelling use cases.

One other \primcodable\ type we have not yet discussed is $\ls\;\alpha$, the type of finite lists of values of type $\alpha$. The $\encode$ and $\decode$ functions are defined recursively via the bijection $\ls\;\alpha\simeq\option\;(\alpha\times\ls\;\alpha)$. (Note that this is not a particularly good encoding for complexity theory, as it grows super-exponentially in the length of the list.) Even without using this instance, we can prove that any function $f:\alpha\to\beta$ is prim.\ rec.\ when $\alpha$ is finite, by getting the elements of $\alpha$ as a list, and writing $f$ as the composition of an index lookup of $a_i$ in $[a_0,\dots,a_{n-1}]$ and the $i$th element function in $[f\;a_0,\dots,f\;a_{n-1}]$ to map $a_i$ to $f\;a_i$.

The proof that $\primcodable\;(\ls\;\alpha)$ is a bit delicate. The definition of the encode/decode functions in Lean is a well-founded recursion, but to show it is primitive recursive we must construct the function without any higher-order features. First, we prove that the $\mathsf{foldl}:(\alpha\to\beta\to\alpha)\to\alpha\to\ls\;\beta\to\alpha$ function is prim.\ rec.\ when its arguments are. To do this, given $f:\alpha\to\beta\to\alpha$, we construct an accumulator $\alpha\times\ls\;\beta$ with the initial inputs, and then repeatedly transform it so that $(a,[])\mapsto(a,[])$ and $(a,b::l)\mapsto(f\;a\;b,l)$. Since the encoding scheme satisfies $\encode\;l\ge \mathsf{length}\;l$ for all lists $l$, if we iterate this map $\encode\;l$ times, we exhaust the input list and the accumulator will contain the desired result. We can then use \textsf{foldl} to define \textsf{reverse}, and combine them to define \textsf{foldr}, which is what we need to define the primcodable function for $\ls\;\alpha$.

Complicating matters, we needed a primcodable instance for $\primcodable\;(\ls\;\alpha)$ to state the original theorem that $\mathsf{foldl}$ is prim.rec., so we have a circularity. To resolve this, we use $\ls\;\N$ as a bootstrap, which is trivially primcodable because it is denumerable.

Once we allow the list itself to be an input, we get some more interesting possibilities. In particular, the function $\mathsf{list.nth}:\ls\;\alpha\to\N\to\option\;\alpha$, which gets an element from a list by index (or returns \None\ if the index is out of bounds), is primitive recursive, and this fact expresses an equivalent of G\"{o}del's sequence number theorem \cite{godel1931} (for a different encoding than G\"{o}del's original encoding). From this we can prove the following ``strong recursion'' theorem:
\begin{lstlisting}
theorem nat_strong_rec
  (f : α → ℕ → σ)
  {g : α → list σ → option σ}
  (hg : primrec₂ g)
  (H : ∀ a n, g a (map (f a) (range n)) = some (f a n)) :
  primrec₂ f
\end{lstlisting}
Ignoring the parameter $a$, the main hypothesis says essentially that $f(n)=g(f\restriction[0,\dots,n-1])$, where the first $n$ values of $f$ have been written in a list (and the length of the list tells $g$ what value of $f$ we are constructing). The reason $g$ has optional return value is to allow for it to fail when the input is not valid.

Once we have lists, the dependent type $\mathsf{vector}\;\alpha\;n$ is just a subtype of $\ls\;\alpha$, so it has an easy \primcodable\ instance, and most of the vector functions follow from their list counterparts. Similarly for functions $\mathsf{fin}\;n\to\alpha$, which are isomorphic to $\mathsf{vector}\;\alpha\;n$.

\subsection{The textbook definition}\label{sec:textbook}
Now that we have a proper theory, we can return to the question of how to show equivalence to Definition  \ref{def:primrec}. We do this by defining $\mathsf{nat.primrec}':\forall n, (\mathsf{vector}\;\N\;n\to\N)\to\mathsf{Prop}$ with only 5 clauses matching Definition \ref{def:primrec}. It is easy to show at this point that $\mathsf{primrec'}$ implies $\mathsf{primrec}$, since all of the functions appearing in Definition \ref{def:primrec} are known to be primitive recursive. For the converse, most of the clauses are easy, but our earlier cheat was to axiomatize that \textsf{mkpair} and \textsf{unpair} are primitive recursive, even though the definition involves addition, multiplication and case analysis in \textsf{mkpair} and even square root in the inverse function (see Figure \ref{fig:unpair}).
\begin{figure}[t]
\begin{lstlisting}
def unpair (n : ℕ) : ℕ × ℕ :=
let s := sqrt n in
if n - s*s < s then (n - s*s, s) else (s, n - s*s - s)
\end{lstlisting}
\caption{The function $\mathsf{unpair} : \N\to\N\times\N$. (Here $\mathsf{sqrt}:\N\to\N$ is actually the function $n\mapsto\lfloor \sqrt n\rfloor$.)}
\label{fig:unpair}
\end{figure}
So we must show that all these operations are primitive recursive by the textbook definition. The square root case is not as difficult as it may sound; since it grows by at most 1 at each step we can define it by primitive recursion as
\begin{ceqn}
\begin{align*}
\lfloor \sqrt 0\rfloor&= 0\\
\lfloor \sqrt{n+1}\rfloor&=\mbox{if }n+1<(y+1)^2\mbox{ then }y\mbox{ else }y+1\\
&\qquad\mbox{where }y=\lfloor \sqrt n\rfloor.
\end{align*}
\end{ceqn}

This alternate basis for \textsf{primrec} is useful for reductions, for example, to show that some other basis for computation like Turing machines can simulate every primitive recursive function.

\section{Partial recursive functions}\label{sec:partrec}
The partial recursive functions are an extension of primitive recursive functions by adding an operator $\mu n.\;p(n)$, where $p:\N\to\mathsf{bool}$ is a predicate, which denotes the least value of $n$ such that $p(n)$ is true. Intuitively, this value is found by starting at 0 and testing ever larger values until a satisfying instance is found. This function is not always defined, in the sense that even when all the inputs are well typed it may not return a value -- it can result in an ``infinite loop.''

Before we tackle the partial recursive functions we must understand partiality itself, and in particular how to represent unbounded computation, computably, in a proof assistant that can only represent terminating computations. As Lean is based on dependent type theory, which is strongly normalizing, all expression evaluation terminates, and so the problem is prima facie unsolvable -- we may as well turn to functional relations as a representation. However, as we shall see, it is actually possible with no additional modifications to CIC or extra axioms.

\subsection{The partiality monad}
We have already discussed the $\option\;\alpha$ type for representing a possible failure state, but nontermination is a slightly different kind of ``failure'' in that the program is not able to tell that it has failed while executing, and this difference makes itself known in the type system.

To address this distinction, we introduce the $\Part\;\alpha$ type:
\begin{lstlisting}
def part (α : Type*) := ∑ p : Prop, (p → α)
\end{lstlisting}
That is, an element $p:\Part\;\alpha$ is a dependent pair of a proposition $p_1$ and a function $p_2:p_1\to \alpha$ from proofs of $p_1$ to $\alpha$. A value of type $\Part\;\alpha$ is a nondecidable optional value, in the sense that there is not necessarily a decision procedure for determining if the $\Part\;\alpha$ contains a value, but if it does then you can extract the value using the function component. This type has a monad structure, as follows:
\begin{ceqn}
\begin{align*}
&\mathsf{pure}:\alpha\to\Part\;\alpha\\
&\mathsf{pure}\;a=\langle \mathsf{true},\lambda\_.\;a\rangle\\
&\mathsf{bind}:\Part\;\alpha\to(\alpha\to\Part\;\beta)\to\Part\;\beta\\
&\mathsf{bind}\;\langle p,f\rangle\;g=\langle(\exists h:p,(g\;(f\;h))_1),(\lambda h.\;(g\;(f\;h_1))_2\;h_2)\rangle
\end{align*}
\end{ceqn}
Also, there is an element $\bot=\langle\mathsf{false},\mathsf{exfalso}\rangle:\Part\;\alpha$ representing an undefined value. We can map $\option\;\alpha\to\Part\;\alpha$ by sending $\some\;a$ to $\mathsf{pure}\;a$ and $\None$ to $\bot$, and assuming the law of excluded middle in Type we can also define an inverse map and show $\option\;\alpha\simeq\Part\;\alpha$, but this breaks the computational interpretation of $\Part\;\alpha$.

The definition of \textsf{bind}, also written in Haskell style as the infix operator \texttt{>>=}, is slightly intricate but is ``exactly what you would expect'' in terms of its behavior. Given a partial value $p:\Part\;\alpha$ and a function $f:\alpha\to\Part\;\beta$, the resulting partial value $p\mbox{ \texttt{>>=} }f:\Part\;\beta$ is defined when $p$ is defined to be some $a:\alpha$, and $f\;a$ is defined, in which case it evaluates to $f\;a$.

It is convenient to abstract from the definition to a relational version, where $a\in p$ means $\exists h:p_1,p_2\;h=a$ -- that is, $a\in p$ says that $p$ is defined and equal to $a$. (This relation is functional because of proof irrelevance.) With this definition the bind operator can be much more easily expressed by the theorem
$$b\in p\mathrel{\texttt{>>=}}f\leftrightarrow\exists a\in p,b\in f\;a$$
which is shared with many other collection-based monad structures. Also, like every other monad there is a \textsf{map} operator, written \texttt{<\$>}, which applies a pure function to a partial value:
\begin{ceqn}
\begin{align*}
&\mathsf{map}:(\alpha\to\beta)\to\Part\;\alpha\to \Part\;\beta\\
&f\mathrel{\mbox{\texttt{<\$>}}}p=\langle p_1,f\circ p_2\rangle
\end{align*}
\end{ceqn}
Because they come up often, we will use the notation $\alpha\pto\beta=\alpha\to\Part\;\beta$ for the type of all partial functions from $\alpha$ to $\beta$.

One important function that is (constructively) definable on this type is $\textsf{fix}$, which has the following properties:
\begin{ceqn}
\begin{align*}
&\mathsf{fix}\;(f:\alpha\pto\beta\oplus\alpha):\alpha\pto\beta\\
&b\in \mathsf{fix}\;f\;a\leftrightarrow\inl\;b\in f\;a\lor\exists a',\inr\;a'\in f\;a\land b\in\mathsf{fix}\;f\;a'
\end{align*}
\end{ceqn}
Given an input $a$, it evaluates $f$ to get either $\inl\;b$ or $\inr\;a'$. In the first case it returns $b$, and in the second case it starts over with the value $a'$. The function $\mathsf{fix}\;f$ is defined when this process eventually terminates with a value, if we assume this then we can construct the value that $\mathsf{fix}\;f$ returns. So even though Lean's type theory does not permit unbounded recursion, by working in this partiality monad we get computable unbounded recursion.

The minimization operator $\mathsf{find}\;p=\mu n.\,p(n)$, which finds the smallest value satisfying the (partial) boolean predicate $p$ can be defined in terms of \textsf{fix} as follows:
\begin{ceqn}
\begin{align*}
&\mathsf{find}:(\N\pto\bool)\pto\N\\
&\mathsf{find}\;p=\mathsf{fix}\;(\lambda n.\;\mbox{if }p\;n\mbox{ then }\mathsf{inl}\;n\mbox{ else }\mathsf{inr}(n+1))\;0
\end{align*}
\end{ceqn}

As an aside, we note that while this monad supports many of the operations one expects on partial recursive functions, one thing it does not support is parallel computation. That is, we would like to have a nondeterministic choice function $\mbox{\texttt{<|>}}:\Part\;\alpha\to\Part\;\alpha\to\Part\;\alpha$ such that $p\mathrel{\mbox{\texttt{<|>}}}q$ is defined if either $p$ or $q$ is defined (with value arbitrarily chosen from the two). This is possible for partial recursive functions, but it is not constructively definable for $\Part$. For this, we must restrict the propositions to be \emph{semidecidable} \cite{bauer2006}, which means essentially that they are a $\Sigma_1$ proposition, that is, a proposition of the form $\exists n.\;f(n)=\mathsf{true}$ for some $f:\N\to\mathsf{bool}$. Every partial recursive function is semidecidable as a consequence of the $\mathsf{eval}_k$ function (see Section \ref{sec:evaln}).

\subsection{\textsf{partrec} and \textsf{computable}}
\begin{figure}
\begin{lstlisting}
inductive partrec : (ℕ ⇸ ℕ) → Prop
| zero : partrec (pure 0)
| succ : partrec succ
| left : partrec (λ n, fst (unpair n))
| right : partrec (λ n, snd (unpair n))
| pair {f g} : partrec f → partrec g →
  partrec (λ n, f n >>= λ a, g n >>= λ b, pure (mkpair a b))
| comp {f g} : partrec f → partrec g →
  partrec (λ n, g n >>= f)
| prec {f g} : partrec f → partrec g →
  partrec (unpaired (λ a n, nat.rec_on n (f a)
    (λ y IH, IH >>= λ i,
      g (mkpair a (mkpair y i)))))
| find {f} : partrec f → partrec (λ a,
  find (λ n, (λ m, m = 0) <$> f (mkpair a n)))
\end{lstlisting}
\caption{The definition of partial recursive on $\N$ in Lean.}
\label{fig:partrec}
\end{figure}
The definition \textsf{nat.partrec} is given in Figure \ref{fig:partrec}. The first 7 cases are almost the same as those of \textsf{primrec}, except that we must now worry about partiality in all the operations that build functions. So for example \lstinline"λ n, f n >>= λ a, g n >>= λ b, pure (mkpair a b)" is the function $n\mapsto(f\;n,g\;n)$ except that if the computation of either $f\;n$ or $g\;n$ fails to return a value, then this is not defined. (In other words, this operation is ``strict'' in both arguments). Similarly, the composition is now expressed as \lstinline"λ n, g n >>= f", which says that $g\;n$ should be evaluated first, and if it is defined and equals $a$, then $f\;a$ is the resulting value.

The interesting case is the last one, which incorporates the \textsf{find} function on $\N$. Ignoring partiality, it says that $\lambda a.\,\mu n.\,f(a,n)=0$ is partial recursive if $f$ is. This is of course the source of the partiality -- all the other constructors produce total functions from total functions but this can be partial if the function $f$ is never zero.

Although this defines a class of partial functions, some of the functions happen to be total anyway, and we call a total partial-recursive function \emph{computable}. It is an easy fact that every primitive recursive function is computable.

As before, we can compose with $\encode$ and $\decode$ to extend these definitions to any \primcodable\ type. Although we could define an analogue of \primcodable\ using computable functions instead of primitive recursive functions, since we want to stick to simple encodings (usually not just primitive recursive but polynomial time), and we already have encodings for all the important types, so \primcodable\ is enough.

One aspect of this definition which is not obviously a problem until one works out all the details is the strictness of the \textsf{prec} constructor. In conventional notation, it says that if $f:\alpha\pto\beta$ and $g:\alpha\to\N\to\beta\pto\beta$ are partial recursive functions, then so is the function $h:\alpha\to\N\pto\beta$ defined by
\begin{ceqn}
\begin{align*}
h(a,0)&=f(a)\\
h(a,n+1)&=g(a,n,h(a,n)).
\end{align*}
\end{ceqn}
Importantly, $h(a,n+1)$ is only defined if $h(a,n)$ is defined and $g(a,n,h(a,n))$ is defined. It does not matter if $g$ does not make use of the argument at all, for example if it is the first projection. This comes up in the definition of the lazy conditional $\ifz[f,g]$, defined when $f:\alpha\pto\beta$, $g:\alpha\pto\beta$ by:
\begin{ceqn}
\begin{align*}
&\ifz[f,g]:\alpha\to\N\pto\beta\\
&\ifz[f,g](a,n)=\begin{cases}
f(a)&\mbox{if }n=0\\
g(a)&\mbox{if }n\ne 0
\end{cases},
\end{align*}
\end{ceqn}
where in particular $\ifz[f,g](a,1)=g(a)$ regardless of whether $f(a)$ is defined. This is the basis of ``if statements'' that resemble execution paths in a computer -- we need a way to choose which subcomputation to perform, without needing to evaluate both. The usual way of implementing $\ifz$ is to use primitive recursion on the argument $n$, using $f$ in the zero case and $g\circ\pi_1$ in the successor case. But because of the strictness constraint, this will result in $\ifz[\bot,g](a,1)=(g\circ\pi_1)(a,0,f(a))=\bot$ because $f(a)=\bot$, rather than the desired result $g(a)$. In fact, we won't have the tools to solve this problem until Section \ref{sec:apps}.

\section{Universality}\label{sec:universal}
\subsection{Codes for functions}
Because \textsf{partrec} is an inductive predicate, we can read off a corresponding data type of syntactic representations witnessing that a function $\N\pto\N$ is partial recursive:
\begin{lstlisting}
inductive code : Type
| zero : code
| succ : code
| left : code
| right : code
| pair : code → code → code
| comp : code → code → code
| prec : code → code → code
| find' : code → code
\end{lstlisting}
We can define the semantics of a code via an ``evaluation'' function that takes a \textsf{code} and an input value in $\N$ and produces a partial $\N$ value.
\begin{lstlisting}
def eval : code → ℕ ⇸ ℕ
| zero         := pure 0
| succ         := succ
| left         := λ n, n.unpair.1
| right        := λ n, n.unpair.2
| (pair cf cg) := λ n,
  eval cf n >>= λ a, eval cg n >>= λ b, pure (mkpair a b)
| (comp cf cg) := λ n, eval cg n >>= eval cf
| (prec cf cg) := unpaired (λ a n,
  nat.rec_on n (eval cf a) (λ y IH, IH >>= λ i,
    eval cg (mkpair a (mkpair y i))))
| (find' cf)  := unpaired (λ a m, (λ i, i + m) <$>
  find (λ n, (λ m, m = 0) <$> eval cf (mkpair a (n + m))))
\end{lstlisting}
Then it is a simple consequence of the definition that $f$ is partial recursive iff there exists a code $\hat f$ such that $f=\eval\;\hat f$.

\emph{Note:} The $\mathsf{find'}$ constructor is a slightly modified version of $\mathsf{find}$ which is easier to use in evaluation:
$$\mathsf{find'}\;f\;(a,m)=(\mu n.\;f(a,n+m)=0)+m,$$
which can be expressed in terms of $\mathsf{find}$ as:
\begin{ceqn}
\begin{align*}
\mathsf{find}\;f\;a&=\mathsf{find'}\;f\;(a,0)\\
\mathsf{find'}\;f\;(a,m)&=\mathsf{find}\;(\lambda x.\,f\,(x_1,x_2+m))\;a+m
\end{align*}
\end{ceqn}
So we can pretend that \textsf{partrec} was defined with a case for $\mathsf{find'}$ instead of $\mathsf{find}$ since it yields the same class of functions.

Now the key fact is that \code\ is \textsf{denumerable}. Concretely, we can encode it using a combination of the tricks we used to encode sums, products and option types, that is,
\begin{ceqn}
\begin{align*}
\encode\,(\mathsf{zero})&=0\\
\encode\,(\mathsf{succ})&=1\\
\encode\,(\mathsf{left})&=2\\
\encode\,(\mathsf{right})&=3\\
\encode\,(\mathsf{pair}\;c_1\;c_2)&=4\cdot(\encode\;c_1,\encode\;c_2)+4\\
\encode\,(\mathsf{comp}\;c_1\;c_2)&=4\cdot(\encode\;c_1,\encode\;c_2)+5\\
\encode\,(\mathsf{prec}\;c_1\;c_2)&=4\cdot(\encode\;c_1,\encode\;c_2)+6\\
\encode\,(\mathsf{find'}\;c)&=4\cdot(\encode\;c)+7
\end{align*}
\end{ceqn}
where $(m,n)$ is the pairing function from Figure \ref{fig:mkpair}. (We could have used a more permissive encoding, but this has the advantage that it is a bijection to $\N$, which makes the proof that this is a \primcodable\ type trivial.)

Having shown that the type is \primcodable\ we can now start to show that functions \emph{on codes} are primitive recursive. In particular, all the constructors are primitive recursive, the recursion principle preserves primitive recursiveness and computability (not partial recursiveness, because of the as-yet unresolved problem with \ifz), and we can prove that these simple functions on codes are primitive recursive:
\begin{ceqn}
\begin{align*}
&\mathsf{const}:\N\to\code\\
&\eval\;(\mathsf{const}\;a)\;n=a\\
&\curry:\code\to\N\to\code\\
&\eval\;(\curry\;c\;m)\;n=\eval\;c\;(m,n)
\end{align*}
\end{ceqn}
In particular, the rather understated fact that $\curry$ is primitive recursive is a form of the $s$-$m$-$n$ theorem of recursion theory.

\subsection{Resource-bounded evaluation}\label{sec:evaln}
We have one more component before the universality theorem. We define a ``resource-bounded'' version of $\eval$, namely $\eval_k:\code\to\N\to\option\;\N$ where $k:\N$. (In the formal text it is called \texttt{evaln}.) This function is total -- we have a definite failure condition this time, unlike $\eval$ itself, which can diverge. There are multiple ways to define this function; the important part is that if $\eval\;c\;n=\bot$ then $\eval_k\;c\;n=\None$ for all $k$, and if $\eval\;c\;n=a$ is defined then $\eval_k\;c\;n=\some\;a$ for some $k$. Furthermore, it is convenient to ensure that $\eval_k$ is monotonic in $k$, and the domain of $\eval_k$ is contained in $[0,k]$, that is, if $n>k$ then $\eval_k\;c\;n=\None$.

\begin{figure}[tb]
\begin{lstlisting}
def evaln : ∀ k : ℕ, code → ℕ → option ℕ
| 0 _ := λ n, none
| (k+1) zero := λ n, guard (n ≤ k) >> pure 0
| (k+1) succ := λ n, guard (n ≤ k) >> pure (succ n)
| (k+1) left := λ n, guard (n ≤ k) >> pure (fst (unpair n))
| (k+1) right := λ n, guard (n ≤ k) >> pure (snd (unpair n))
| (k+1) (pair cf cg) := λ n, guard (n ≤ k) >>
  evaln (k+1) cf n >>= λ a, evaln (k+1) cg n >>= λ b, pure (mkpair a b)
| (k+1) (comp cf cg) := λ n, guard (n ≤ k) >>
  evaln (k+1) cg n >>= λ x, evaln (k+1) cf x
| (k+1) (prec cf cg) := λ n, guard (n ≤ k) >>
  unpaired (λ a m, nat.rec_on m
    (evaln (k+1) cf a) 
    (λ y, evaln k (prec cf cg) (mkpair a y) >>= λ i,
          evaln (k+1) cg (mkpair a (mkpair y i)))) n
| (k+1) (find' cf)  := λ n, guard (n ≤ k) >>
  unpaired (λ a m,
    evaln (k+1) cf (mkpair a m) >>= λ x,
    if x = 0 then pure m else
      evaln k (find' cf) (mkpair a (m+1))) n
\end{lstlisting}
\caption{The definition of resource-bounded evaluation of partial recursive functions in Lean. \emph{Notation note:} The \texttt{>>} operator is monad sequencing, i.e. $a\mathrel{\texttt{>>}}b=a\mathrel{\texttt{>>=}}\lambda\_.\;b$, and $\mathtt{guard}\ p:\mathsf{option}\;\mathsf{unit}$ is the function that returns $\mathsf{some}\;()$ if $p$ is true and $\mathsf{none}$ if $p$ is false. Together they ensure that \texttt{evaln k c n = none} unless $n\le k$.}
\label{fig:evaln}
\end{figure}

The Lean definition of \texttt{evaln} is given in Figure \ref{fig:evaln}. The details of the definition are not so important, but it is interesting to note that our ``fuel'' $k$ for the computation only needs to decrease when we don't change the program code in the recursive call, namely in the \textsf{prec} and \textsf{find'} cases, thanks to Lean's pattern matcher (which compiles this definition into one by nested structural recursion). (You may wonder why we cannot use the fact that $n$ is decreasing in the \textsf{prec} case to prove termination, but this is because the function is not defined by recursion on $n$, it is by recursion on $k$ at all $n\le k$ simultaneously.)

Because $\eval_k\;c:\N\to\option\;\N$ has finite domain $n\in [0,k]$ outside which it is \None, we can encode the whole function as a single $\ls\;(\option\;\N)$. Thus we can pack the function into the type $\N\times\code\to\ls\;(\option\;\N)$, and define this by strong recursion (using the theorem \texttt{nat\_strong\_rec} mentioned in Section \ref{sec:primrec}), since in every case of the recursion, either $k$ decreases and $c$ remains fixed, or $c$ decreases and $k$ remains fixed.

Thus $\mathsf{evaln}:\N\to\code\to\N\to\option\;\N$ is primitive recursive (jointly in all arguments), and since $\eval\;c\;n=\eval_{k'}\;c\;n$ where $k'=\mu k.\;(\eval_{k'}\;c\;n\ne\None)$, this shows that $\eval$ is partial recursive. This is Kleene's normal form theorem (in a different language) -- $\eval$ is a universal partial recursive function.

\subsection{Applications}\label{sec:apps}
The fixed point theorems are an easy consequence of universality. These have all been formalized; the formalized theorem names are given in parentheses.
\begin{theorem}[\texttt{fixed\_point}]\label{thm:fixed_point}
If $f:\code\to\code$ is computable, then there exists some code $c$ such that $\eval(f\;c)=\eval\;c$.
\end{theorem}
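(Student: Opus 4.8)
The plan is to follow the classical diagonalization argument, exploiting the tools developed in Sections~\ref{sec:universal} and~\ref{sec:evaln}: the universality of $\eval$, the primitive recursiveness of $\curry$ (the $s$-$m$-$n$ theorem), and the fact that a function is partial recursive iff it equals $\eval\;\hat g$ for some code $\hat g$. The key idea is to build a "self-application with a twist" code. First I would consider the partial recursive function $h:\N\pto\N$ defined by $h(x) = \eval\;(f\,(\curry\;c_x\;x))\;(\text{--})$ in the right sense, where $c_x$ is itself obtained from $x$; more precisely, I would define $g:\N\to\N\pto\N$ by
\begin{ceqn}
\begin{align*}
g\;x\;y = \eval\;\big(f\,(\curry\;\hat e\;x)\big)\;y,
\end{align*}
\end{ceqn}
where $\hat e$ is a fixed code to be chosen, and then use a fixed-point-free self-reference: choose $\hat e$ to be a code for the two-argument function $(x,y)\mapsto \eval\;(f\,(\curry\;\hat e\;x))\;y$ itself. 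This circularity is resolved exactly because $\curry$ and the code-manipulation functions are primitive recursive and $\eval$ is partial recursive, so $(x,y)\mapsto\eval\,(f\,(\curry\;\hat e\;x))\;y$ is partial recursive \emph{once $\hat e$ is given as a numeral}, and we can pick $\hat e$ to be a code for it — but this is precisely the point that needs care.

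Concretely, the cleanest route is: let $F:\code\to\N\pto\N$ be $F\;c\;x = \eval\;(f\,(\curry\;c\;x))\;x$ — wait, I want the self-application on the \emph{same} argument. Define $v:\N\pto\N$ by $v\;x = \eval\;(f\,(\curry\;c_0\;x))\;x$ where I will fix $c_0$ below. Since $f$ is computable, $\curry$ is primitive recursive, and $\eval$ is partial recursive, the two-variable function $(c,x)\mapsto\eval\,(f\,(\curry\;c\;x))\;x$ is partial recursive; so is $x\mapsto \eval\,(f\,(\curry\;c_0\;x))\;x$ for each fixed numeral $c_0$, \emph{by composition with the constant $c_0$}. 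Hence there is a code $e$ with $\eval\;e\;x = \eval\,(f\,(\curry\;e\;x))\;x$ — here I substitute $c_0 := e$, which is legitimate because the code $e$ depends only on the partial recursive function $x\mapsto \eval\,(f\,(\curry\;c\;x))\;x$ regarded as a function of $c$ and $x$; I instantiate the parametric construction at the value $e$. Now set $c := \curry\;e\;e$. Then by the defining property of $\curry$, for every $n$,
\begin{ceqn}
\begin{align*}
\eval\;c\;n = \eval\;(\curry\;e\;e)\;n = \eval\;e\;(e,n),
\end{align*}
\end{ceqn}
so I should actually set up $v$ with a pairing so that $\eval\;e\;x = \eval\,(f\,(\curry\;e\;\pi_1 x))\;(\pi_2 x)$; then $\eval\;c\;n = \eval\;e\;(e,n) = \eval\,(f\,(\curry\;e\;e))\;n = \eval\,(f\;c)\;n$, giving $\eval\,(f\;c)=\eval\;c$ as desired.

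The main obstacle is making the self-reference rigorous rather than circular: one cannot simply "let $e$ be a code for a function that mentions $e$." The honest form of the argument is the $s$-$m$-$n$ move: first build the code $e_0$ for the \emph{two}-argument partial recursive function $\Phi(c,n) := \eval\,(f\,(\curry\;c\;\pi_1 n))\;(\pi_2 n)$ (this uses that $f$ is computable, $\curry$ and $\mathsf{unpair}$ are primitive recursive, and $\eval$ is partial recursive, all composed), then take $e := \curry\;e_0\;e_0$ so that $\eval\;e\;n = \Phi(e_0,n)$ — hmm, I need $\Phi(e,\cdot)$ not $\Phi(e_0,\cdot)$, so the diagonal parameter must be $e$ itself, which means I should define $\Phi(c,n) = \eval\,(f\,(\curry\;c\;\pi_1 n))\;(\pi_2 n)$ and set $e := \curry\;(\text{code for }\Phi)\;(\text{code for }\Phi)$, i.e. feed the code of $\Phi$ to itself; unwinding, with $c := \curry\;e\;e$ one checks $\eval\;c = \eval\,(f\;c)$. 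Getting the bookkeeping of which code is substituted where exactly right — and verifying the chain of equalities holds \emph{as partial functions}, i.e. with equal domains, using the relational characterization $b\in p$ — is the fiddly part; everything else is a direct appeal to universality and to the primitive recursiveness of $\curry$.
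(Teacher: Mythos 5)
Your overall strategy is the right one and matches the paper's: diagonalize using the primitive recursiveness of $\curry$ (the $s$-$m$-$n$ theorem) together with the universality of $\eval$. But the concrete construction you commit to at the end does not close the loop. With $\Phi(c,n)=\eval\,(f\,(\curry\;c\;(\pi_1 n)))\;(\pi_2 n)$, a code $e_0$ for $\Phi$, $e:=\curry\;e_0\;e_0$ and $c:=\curry\;e\;e$, unwinding gives
\begin{ceqn}
\begin{align*}
\eval\;c\;n=\eval\;e\;(e,n)=\eval\;e_0\;(e_0,(e,n))=\Phi(e_0,(e,n))=\eval\,(f\,(\curry\;e_0\;e))\;n,
\end{align*}
\end{ceqn}
whereas $\eval\,(f\;c)\;n=\eval\,(f\,(\curry\;e\;e))\;n$. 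Since $f$ is an arbitrary computable function on codes (not assumed extensional) and $\curry\;e_0\;e$ and $\curry\;e\;e$ are different codes, these need not agree: the self-reference you wanted ($e$ occurring inside its own definition) has silently degraded to a reference to $e_0$. This is exactly the classic off-by-one-diagonalization pitfall in proofs of the recursion theorem, and asserting that ``one checks'' the identity does not discharge it.

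The fix is small: the diagonal must occur \emph{inside} $\Phi$. Define $\Phi(c,n)=\eval\,(f\,(\curry\;c\;c))\;n$, let $e_0$ be a code for $\Phi$ as a two-argument function, and set $c:=\curry\;e_0\;e_0$ directly, with no second layer of currying. Then $\eval\;c\;n=\Phi(e_0,n)=\eval\,(f\,(\curry\;e_0\;e_0))\;n=\eval\,(f\;c)\;n$. The paper's proof is this same argument factored slightly differently: it takes $g\;x\;y=\eval\,(\eval\;x\;x)\;y$ with code $\hat g$, the total function $F\;x=f\,(\curry\;\hat g\;x)$ with code $\hat F$, and $c:=\curry\;\hat g\;\hat F$, so that $\eval\,(f\;c)\;n=\eval\,(F\;\hat F)\;n=\eval\,(\eval\;\hat F\;\hat F)\;n=g\;\hat F\;n=\eval\;\hat g\;(\hat F,n)=\eval\;c\;n$. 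Either way, one application of $\curry$ to a self-applied code suffices; your extra $\curry\;e\;e$ on top of $\curry\;e_0\;e_0$ is what breaks the identity.
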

\begin{proof}
Consider the function $g:\N\to\N\pto\N$ defined by $g\;x\;y=\eval\;(\eval\;x\;x)\;y$ (using $\decode:\N\to\code$ to use natural numbers as codes in $\eval$). This function is clearly partial recursive, so let $g=\eval\;\hat g$. Now let $F:\N\to\code$ such that $F\;x=f\;(\curry\;\hat g\;x)$; then $F$ is computable so let $F=\eval\;\hat F$. Then for $c=\curry\;\hat g\;\hat F$ we have:
\begin{ceqn}
\begin{align*}
\eval\;(f\;c)\;n&=\eval\;(f\;(\curry\;\hat g\;\hat F))\;n\\
&=\eval\;(F\;\hat F)\;n\\
&=\eval\;(\eval\;\hat F\;\hat F)\;n\\
&=g\;\hat F\;n\\
&=\eval\;\hat g\;(\hat F,n)\\
&=\eval\;(\curry\;\hat g\;\hat F)\;n\\
&=\eval\;c\;n.\\[-40pt]
\end{align*}
\end{ceqn}
\end{proof}

\begin{theorem}[$\mathtt{fixed\_point_2}$]\label{thm:fixed_point2}
If $f:\code\to\N\pto\N$ is partial recursive, then there exists some code $c$ such that $\eval\;c=f\;c$.
\end{theorem}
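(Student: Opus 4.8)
The plan is to reduce Theorem~\ref{thm:fixed_point2} to Theorem~\ref{thm:fixed_point} by absorbing the partial recursion of $f$ into a \emph{total} (in fact primitive recursive) operation on codes. Concretely, I would construct a computable function $h:\code\to\code$ with the property that $\eval\;(h\;c)=f\;c$ for every code $c$, and then apply Theorem~\ref{thm:fixed_point} to $h$ to obtain a code $c$ with $\eval\;(h\;c)=\eval\;c$, whence $\eval\;c=\eval\;(h\;c)=f\;c$.

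To build $h$, I would first uncurry $f$ to a partial function $\code\times\N\pto\N$, $(c,n)\mapsto f\;c\;n$, which is partial recursive by hypothesis (this is exactly what it means for $f:\code\to\N\pto\N$ to be partial recursive, once we pass through the \primcodable\ encodings of $\code\times\N$ and $\N$). Since $\eval$ is universal, fix a code $\hat f$ with $\eval\;\hat f\;(c,n)=f\;c\;n$ for all $c,n$ — here I am silently identifying a code with its encoding, and using that the product encoding on $\code\times\N$ is literally $\mathsf{mkpair}$ applied to the component encodings (and that the encoding on $\N$ is the identity), so that the pair $(\encode\;c,n)$ fed to $\eval$ matches what $\curry$ produces. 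Now set $h\;c=\curry\;\hat f\;(\encode\;c)$. By the defining property of $\curry$ we get $\eval\;(h\;c)\;n=\eval\;\hat f\;(\encode\;c,n)=f\;c\;n$, and $h$ is primitive recursive because $\encode$ on the \primcodable\ type $\code$ is primitive recursive and $\curry$ is primitive recursive (its being primitive recursive is the $s$-$m$-$n$ fact noted earlier); in particular $h$ is computable, so Theorem~\ref{thm:fixed_point} applies.

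Applying Theorem~\ref{thm:fixed_point} to $h$ yields a code $c$ with $\eval\;(h\;c)=\eval\;c$, and then for all $n$:
\begin{align*}
\eval\;c\;n &= \eval\;(h\;c)\;n \\
&= \eval\;(\curry\;\hat f\;(\encode\;c))\;n \\
&= \eval\;\hat f\;(\encode\;c,n) \\
&= f\;c\;n,
\end{align*}
so $\eval\;c=f\;c$ as partial functions $\N\pto\N$ (by extensionality of $\Part$). The only real subtlety, and the step I would be most careful about, is the encoding bookkeeping in the second paragraph: one must check that the pairing used by $\curry$ agrees with the product encoding implicit in the statement ``$f$ is partial recursive,'' so that the code $\hat f$ genuinely computes $(c,n)\mapsto f\;c\;n$ when fed $\curry$'s packaged input. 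Everything else is a routine combination of universality and the primitive recursiveness of $\curry$ and $\encode$.
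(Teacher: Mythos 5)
Your proposal is correct and is essentially the paper's own proof: both uncurry $f$ to a code $\hat f$ and apply Theorem~\ref{thm:fixed_point} to the computable map $c\mapsto\curry\;\hat f\;c$, then unwind using the defining property of $\curry$. Your version merely makes explicit the $\encode$ bookkeeping that the paper leaves implicit when feeding a code to $\curry$.
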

\begin{proof}
Let $f=\eval\;\hat f$, and apply Theorem \ref{thm:fixed_point} to $\curry\;\hat f$ to obtain a $c$ such that\\ $\eval\;(\curry\;\hat f\;c)=\eval\;c$. Then
\begin{ceqn}
\begin{align*}
\eval\;c\;n&=\eval\;(\curry\;\hat f\;c)\;n\\
&=\eval\;\hat f\;(c,n)\\
&=f\;c\;n.\\[-40pt]
\end{align*}
\end{ceqn}
\end{proof}
We can also finally solve the $\ifz$ problem. If $f$ and $g$ are partial recursive functions, then letting $f=\eval\;\hat f$ and $g=\eval\;\hat g$, the function
$$c(n)=\begin{cases}
\hat f&\mbox{if }n=0\\
\hat g&\mbox{if }n\ne 0
\end{cases}$$
is primitive recursive (since both branches are just numbers now instead of computations that may not halt), and $\ifz[f,g](a,n)=\eval\;c(n)\;a$. More generally, this implies that we can evaluate conditionals where the condition is a computable function and the branches are partial functions. We can also construct a nondeterministic choice function:

\begin{theorem}[\texttt{merge}]\label{thm:merge}
If $f,g:\alpha\pto\beta$ are partial recursive functions, then there exists a function $h:\N\pto\N$ such that $h(a)$ is defined iff either $f(a)$ or $g(a)$ is defined, and if $x\in h(a)$ then $x\in f(a)$ or $x\in g(a)$.
\end{theorem}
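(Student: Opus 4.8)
The plan is to sequentialize the desired parallel computation using the resource-bounded evaluator $\eval_k$, exploiting that $\mathsf{evaln}$ is primitive recursive jointly in all of its arguments. Since $f$ and $g$ are partial recursive, fix codes $\hat f,\hat g:\code$ whose evaluations $\eval\,\hat f,\eval\,\hat g:\N\pto\N$ realize $f$ and $g$ after passing through the $\encode$/$\decode$ layers. The key observation is that the domain of a partial recursive function is semidecidable: $\eval\,\hat f\,a$ is defined iff there is some fuel $k$ with $\eval_k\,\hat f\,a\ne\None$, and likewise for $\hat g$. A union of two semidecidable predicates is again semidecidable, and we can produce a witnessing $k$ by a single search over the fuel.

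Accordingly, I would define $h$ as follows: on input $a$, first run $\mathsf{find}\,(\lambda k.\ \eval_k\,\hat f\,a\ne\None\lor\eval_k\,\hat g\,a\ne\None)$ to obtain the least $k$ at which one of the two computations succeeds; then return the value of $\eval_k\,\hat f\,a$ if it is $\some\,x$, and otherwise the value of $\eval_k\,\hat g\,a$ (breaking ties in favor of $\hat f$; the case where both are $\None$ is vacuous since this $k$ was found). To see that $h$ is partial recursive: the search predicate $\lambda(a,k).\ \eval_k\,\hat f\,a\ne\None\lor\eval_k\,\hat g\,a\ne\None$ is a primitive recursive predicate, because $\eval_k$ is primitive recursive jointly in $k$, the code, and the input; applying the minimization operator $\mathsf{find}$ (i.e.\ $\mu$) to it yields a partial recursive function of $a$; and post-composing with the value-extraction step, which again only uses $\eval_k$ and case analysis on $\option$, keeps the whole construction partial recursive.

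It then remains to verify the two asserted properties. For the domain equivalence: if $f\,a$ is defined with value $x$, then $\eval\,\hat f\,a=x$, so by the characterizing property of $\eval_k$ there is some $k$ with $\eval_k\,\hat f\,a=\some\,x$, hence the search halts and $h\,a$ is defined; the case of $g\,a$ defined is symmetric. Conversely, if $h\,a$ is defined then the search halted at some $k$ with $\eval_k\,\hat f\,a\ne\None$ or $\eval_k\,\hat g\,a\ne\None$; since such a success entails that the corresponding $\eval$ is defined, $f\,a$ or $g\,a$ is defined. For the value property: if $x\in h\,a$ and $k$ is the fuel found, then either $\eval_k\,\hat f\,a=\some\,x$, whence $x\in\eval\,\hat f\,a$, i.e.\ $x\in f\,a$; or $\eval_k\,\hat f\,a=\None$ and $\eval_k\,\hat g\,a=\some\,x$, whence $x\in g\,a$.

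The main obstacle is really the conceptual point already flagged in the paper: $\Part$ has no constructive parallel-or combinator, so the naive ``start both and take whichever returns'' is not directly expressible. The resource-bounded $\eval_k$ is exactly the device that dodges this, converting the two non-terminating searches into one search over fuel in which each step terminates. The remaining points are bookkeeping rather than conceptual: keeping straight the $\encode$/$\decode$ coercions relating the $\alpha\pto\beta$ statement to the $\N\pto\N$ construction, respecting the constraints on $\eval_k$ (monotonicity in $k$ and the domain restriction $n\le k$, which merely means the search only succeeds for sufficiently large $k$), and making the post-search value extraction manifestly primitive recursive.
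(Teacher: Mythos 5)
Your proposal is correct and matches the paper's proof essentially exactly: the paper defines $h(n)=\mathsf{find}(\lambda k.\;\eval_k\;\hat{f}\;n\mathrel{\mbox{\texttt{<|>}}}\eval_k\;\hat{g}\;n)$, which is precisely your search over fuel $k$ with ties broken in favor of $\hat f$ via the alternative operator on $\option\;\N$. The correctness argument via the soundness/completeness of $\eval_k$ with respect to $\eval$ is also the same.
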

\begin{proof}
It is easy to reduce to the case where $f,g:\N\pto\N$. Let $f=\eval\;\hat{f}$ and $g=\eval\;\hat{g}$; then $h(n)=\mathsf{find}(\lambda k.\;\eval_k\;\hat{f}\;n\mathrel{\mbox{\texttt{<|>}}}\eval_k\;\hat{g}\;n)$ works, where \mbox{\texttt{<|>}} is the alternative operator on $\mathsf{option}\;\N$.
\end{proof}

A corollary is Post's theorem on the equivalence of computable and r.e. co-r.e. sets:
\begin{theorem}[\texttt{computable\_iff\_re\_compl\_re}]\label{thm:post}
If $p:\alpha\to\mathsf{Prop}$ is a decidable predicate, then $p$ is computable iff $p$ is r.e. and $\lambda a.\;\neg p\;a$ is r.e.
\end{theorem}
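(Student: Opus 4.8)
The plan is to prove the two implications separately, leaning on the machinery of Section~\ref{sec:apps}: the \texttt{merge} theorem (Theorem~\ref{thm:merge}) and the observation, made just before it, that conditionals with a \emph{computable} condition and partial-recursive branches are partial recursive. I will use the standing conventions that a decidable predicate $p$ is \emph{computable} exactly when its coercion $\bar p:\alpha\to\bool$ is a computable function, and that $p$ is \emph{r.e.} exactly when there is a partial recursive function that halts on input $a$ precisely when $p\,a$ holds.

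For the forward direction, suppose $p$ is computable, so $\bar p:\alpha\to\bool$ is a computable function; then so is $\bar q = \mathsf{bnot}\circ\bar p$, which coerces $\lambda a.\,\neg p\,a$. I would exhibit the partial recursive function $f\,a = \mathbf{if}\ \bar p\,a\ \mathbf{then}\ \mathsf{pure}\,0\ \mathbf{else}\ \bot$ (partial recursive by the conditional principle above, with computable condition $\bar p$ and constant branches; equivalently $\lambda a.\,\mathsf{find}(\lambda n.\,\bar p\,a)$). It halts on $a$ iff $p\,a$, so $p$ is r.e.; running the same construction with $\bar q$ in place of $\bar p$ shows $\lambda a.\,\neg p\,a$ is r.e.

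For the backward direction, let $f_0$ and $g_0$ be partial recursive functions halting on $a$ exactly when $p\,a$, resp.\ $\neg p\,a$. Reducing to the case $\alpha=\beta=\N$ (as in the proof of Theorem~\ref{thm:merge}) and replacing $f_0,g_0$ by $f = f_0\mathrel{\texttt{>>=}}\lambda\_.\,\mathsf{pure}\,0$ and $g = g_0\mathrel{\texttt{>>=}}\lambda\_.\,\mathsf{pure}\,1$, I would apply the \texttt{merge} theorem to obtain a partial recursive $h$ with $h\,a$ defined iff $f\,a$ or $g\,a$ is, and with $x\in h\,a \to x\in f\,a \lor x\in g\,a$. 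Since $p\,a\lor\neg p\,a$ always holds, $f\,a$ or $g\,a$ is always defined, so $h$ is total; and because $\operatorname{dom} f$ and $\operatorname{dom} g$ are disjoint while $f$ only outputs $0$ and $g$ only outputs $1$, the unique value of $h\,a$ satisfies $h\,a = 0 \leftrightarrow p\,a$. Hence $p$ is the predicate $\lambda a.\,(h\,a = 0)$, which is computable: a total partial recursive function is computable, and composing with the primitive recursive test $n\mapsto(n=0)$ gives a computable $\bool$-valued function.

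The only real subtlety lies in the backward direction: ensuring that the merged $h$ is genuinely \emph{total} (this is where $p\,a\lor\neg p\,a$ enters, available from decidability of $p$ — or classically in any case), and keeping the outputs of $f$ and $g$ in the disjoint sets $\{0\}$ and $\{1\}$ so that $h$ actually \emph{decides} $p$ rather than merely semidecides it. Everything else — the conditional principle, closure of partial recursive functions under $\mathrel{\texttt{>>=}}$, and ``total partial recursive $\Rightarrow$ computable'' — is already available from the preceding sections.
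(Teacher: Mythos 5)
Your proof is correct and follows essentially the same route as the paper: the reverse direction applies Theorem~\ref{thm:merge} to the two semidecision procedures tagged with distinguishable outputs (the paper uses $f(a)\mathrel{\texttt{>>}}\mathsf{pure}\;\mathsf{true}$ and $g(a)\mathrel{\texttt{>>}}\mathsf{pure}\;\mathsf{false}$ where you use $0$ and $1$), and uses decidability of $p$ to conclude that the merged function is total and hence a computable indicator for $p$. The forward direction, which the paper dismisses as trivial, is handled by exactly the kind of conditional construction you describe.
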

\begin{proof}
The forward direction is trivial. In the reverse direction, if $f,g:\alpha\pto\mathsf{unit}$ are chosen such that $f(a)$ is defined iff $p(a)$ and $g(a)$ is defined iff $\neg p(a)$, then by Theorem \ref{thm:merge} there is a function $h:\alpha\pto\mathsf{bool}$ extending $\lambda a.\;f(a)\mathrel{\texttt{>>}}\mathsf{pure}\;\mathsf{true}$ and $\lambda a.\;g(a)\mathrel{\texttt{>>}}\mathsf{pure}\;\mathsf{false}$. This function has domain $\{a\mid p(a)\lor\neg p(a)\}=\alpha$ (because $p$ is decidable) and is $\mathsf{true}$ when $p(a)$ is true and is $\mathsf{false}$ when $\neg p(a)$. Thus $h$ is a computable indicator function for $p$.
\end{proof}

The assumption that $p$ is decidable is not the tightest condition we could assert; it suffices $p$ is stable, i.e. $\neg\neg p(a)\to p(a)$, or alternatively we could assume Markov's principle or LEM.

We conclude with Rice's theorem on the noncomputability of all nontrivial properties about computable functions:
\begin{theorem}[\texttt{rice}]\label{thm:rice}
Let $C\subseteq (\N\pto\N)$ such that $\{c\mid\eval\;c\in C\}$ is computable. Then for any $f,g:\N\pto\N$, $f\in C$ implies $g\in C$ (so classically $C=\emptyset\lor C=\N\pto\N$).
\end{theorem}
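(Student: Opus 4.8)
The plan is to obtain Rice's theorem as a short corollary of the fixed point theorem (Theorem~\ref{thm:fixed_point}). First I would pick codes $\hat f,\hat g$ with $\eval\,\hat f=f$ and $\eval\,\hat g=g$; this is where the hypothesis that $f$ and $g$ are partial recursive is used (the formal statement, like that of Theorem~\ref{thm:fixed_point2}, carries such hypotheses). Write $D=\{c\mid\eval\;c\in C\}$ for the index set; it is computable by assumption, so $c\in D$ is a decidable predicate, and by construction $\hat f\in D\leftrightarrow f\in C$ and $\hat g\in D\leftrightarrow g\in C$.

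Assuming $f\in C$, I want to build a proof that $g\in C$. Define the ``program switch'' $F:\code\to\code$ by $F\;c=\hat g$ when $c\in D$ and $F\;c=\hat f$ otherwise. Since $D$ is computable, $F$ is computable: it factors as the characteristic function $\chi_D:\code\to\bool$ followed by the (primitive recursive) map $\bool\to\code$ that selects $\hat g$ or $\hat f$. Applying Theorem~\ref{thm:fixed_point} to $F$ produces a code $c$ with $\eval\,(F\;c)=\eval\;c$.

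Now case-split on the decidable predicate $c\in D$. If $c\in D$ then $F\;c=\hat g$, hence $\eval\;c=\eval\,(F\;c)=\eval\,\hat g=g$; but $c\in D$ also says $\eval\;c\in C$, so rewriting along $\eval\;c=g$ gives $g\in C$. If $c\notin D$ then $F\;c=\hat f$, hence $\eval\;c=\eval\,\hat f=f$, while $c\notin D$ says $\eval\;c\notin C$, i.e.\ $f\notin C$; together with the hypothesis $f\in C$ this is absurd, so this branch is vacuous and yields $g\in C$ as well. Hence $g\in C$. The parenthetical classical reformulation then follows by excluded middle on whether $C$ is inhabited.

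There is essentially no obstacle here -- all the mathematical content is packed into Theorem~\ref{thm:fixed_point}, which is already proved. The only things to be careful about are keeping the argument constructive (we case-split on the \emph{decidable} predicate $c\in D$ rather than arguing by contradiction, since membership in $C$ is not assumed decidable) and the routine check that the switch $F$ is computable, which comes down to the fact that every function out of $\bool$ is primitive recursive.
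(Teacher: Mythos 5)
Your proof is correct and follows essentially the same route as the paper: a fixed-point argument applied to a ``switch'' that behaves like $g$ on indices in $D=\{c\mid\eval\;c\in C\}$ and like $f$ off it, followed by a case split on the decidable predicate $c\in D$ (and you are right that the formal statement carries partial-recursiveness hypotheses on $f$ and $g$ that the informal statement elides). The only cosmetic difference is that you apply Theorem~\ref{thm:fixed_point} to the code-valued switch ($\hat g$ if $c\in D$, else $\hat f$), whereas the paper applies Theorem~\ref{thm:fixed_point2} to the function-valued switch $F\;c\;n=\mbox{if }\eval\;c\in C\mbox{ then }g\;n\mbox{ else }f\;n$; since Theorem~\ref{thm:fixed_point2} is obtained from Theorem~\ref{thm:fixed_point} by currying, the two arguments coincide, with your version having the minor advantage of not needing the conditional-with-partial-branches fact.
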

\begin{proof}
Apply Theorem \ref{thm:fixed_point2} to the function $F\;c\;n=\mbox{if }\eval\;c\in C\mbox{ then }g\;n\mbox{ else }f\;n.$ to obtain a $c$ such that $\eval\;c=F\;c$. (Note $\eval\;c\in C$ is decidable because it is computable.) Then if $\eval\;c\in C$, we have $F\;c\;n=g\;n$ for all $n$ so $\eval\;c=F\;c=g$, hence $g\in C$. And if $\eval\;c\notin C$ then $\eval\;c=F\;c=f$ similarly which contradicts $f\in C$, $\eval\;c\notin C$.
\end{proof}

The undecidability of the halting problem is a trivial corollary:
\begin{theorem}[\texttt{halting\_problem}]\label{thm:halting}
The set $\{c:\code\mid \eval\;c\;0\mbox{ is defined}\}$ is not computable.
\end{theorem}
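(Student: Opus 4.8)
The plan is to obtain this as a one-line corollary of Rice's theorem (Theorem~\ref{thm:rice}). Let $C\subseteq(\N\pto\N)$ be the class of partial functions that are defined at the input $0$, i.e.\ $h\in C$ iff $h\;0$ is defined (equivalently, $\exists a,\ a\in h\;0$). With this choice the set $\{c\mid\eval\;c\in C\}$ is exactly the set $\{c:\code\mid\eval\;c\;0\text{ is defined}\}$ appearing in the statement, so it suffices to show that this $C$ violates the conclusion of Theorem~\ref{thm:rice}: we must exhibit partial recursive functions $f,g:\N\pto\N$ with $f\in C$ and $g\notin C$.

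For $f$ I would take the constant function $\mathsf{pure}\;0=\eval\;\mathsf{zero}$, which is partial recursive and satisfies $0\in f\;0$, hence $f\in C$. For $g$ I would take the empty partial function $\lambda n.\,\bot$, which is also partial recursive -- for instance it equals $\eval\;(\mathsf{find'}\;\mathsf{succ})$, since $\eval\;\mathsf{succ}$ is never $0$ and so the inner minimization always diverges. As $g\;0=\bot$ is undefined, $g\notin C$. Now assume for contradiction that $\{c:\code\mid\eval\;c\;0\text{ is defined}\}$ is computable. Then Theorem~\ref{thm:rice} applies to $C$ and gives $f\in C\Rightarrow g\in C$; since $f\in C$ we conclude $g\in C$, contradicting $g\notin C$. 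Hence the set is not computable.

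Every step here is an immediate unfolding of definitions, so I do not anticipate any real obstacle -- the only content is checking the two membership facts $f\in C$ and $g\notin C$, and these follow directly from the definition of membership in $\Part\;\N$. (If one preferred not to route through Rice's theorem, the same conclusion can be obtained from a direct diagonal argument against $\eval$ using Theorem~\ref{thm:fixed_point2} or universality, but reusing Theorem~\ref{thm:rice} is the shortest path given what is already available.)
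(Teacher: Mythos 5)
Your proposal is correct and follows exactly the paper's own argument: write the set as $\{c\mid\eval\;c\in C\}$ with $C=\{h\mid h\;0\text{ is defined}\}$ and apply Rice's theorem with $f=\lambda n.\,0$ and $g=\lambda n.\,\bot$ to get a contradiction. The extra detail of exhibiting a concrete code for the empty partial function is fine but not needed beyond knowing $\lambda n.\,\bot$ is partial recursive.
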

\begin{proof}
Suppose it is; we can write it as $\{c\mid \eval\;c\in C\}$ where $C=\{f\mid f\;0\mbox{ is defined}\}$, so applying Rice's theorem with $f=\lambda n.\;0$ and $g=\lambda n.\;\bot$ we have a contradiction from $f\in C$ and $g\notin C$.
\end{proof}

\begin{figure}[t]
\centering
\begin{tabular}{r|l|r}
    File & Section & Line Count \\\hline
    \texttt{primrec} & Section \ref{sec:primrec} & 1338 \\
    \texttt{partrec} & Section \ref{sec:partrec} & 730 \\
    \texttt{partrec\_code} & Section \ref{sec:universal} & 918 \\
    \texttt{halting} & Section \ref{sec:apps} & 354
\end{tabular}
\caption{Line counts (unadjusted) for the files in this formalization. Note that \texttt{primrec.lean} contains mostly endpoint theorems intended for presenting users with a convenient API for primitive recursion proofs.}
\label{fig:my_label}
\end{figure}
\section{Related Works}
While this is the first formalization of computability theory in Lean, there are a variety of related formalizations in other theorem provers.
\begin{itemize}
\item Zammit (1997) \cite{zammit1997} uses $n$-ary $\mu$-recursive functions with an explicit big-step semantics. Although we believe we have reproduced all the theorems in this report and more, it should be noted that this predates all the others on this list by more than 10 years.
\item Norrish achieves a substantial amount in \cite{norrish2011}, using the $\lambda$-calculus in HOL4, up to Rice's theorem and r.e.\ sets. The primary differences involve the differing model of computation and differences from working in a classical higher order logic system rather than a dependent type theory. (Lean is primarily focused on classical logic, but it permits working in intuitionistic logic, and there was no particular reason to assume LEM except in Theorem \ref{thm:post}.)
\item Asperti and Riccoti \cite{asperti2012} have formalized the construction of a universal Turing machine in Matita, but do not go as far as the halting problem or recursively enumerable sets.
\item ``Mechanising turing machines and computability theory in Isabelle/HOL'' by Xu, Zhang and Urban \cite{xu2013} builds from Turing machines, constructs a universal Turing machine, formalizes the halting problem, and relates them to abacus machines and recursive functions. But they acknowledge up front that formalizing TMs is a ``daunting prospect,'' and their formalization is much longer (although direct comparisons are misleading at best).
\item Forster and Smolka \cite{forster2017} formalize call-by-value $\lambda$-calculus in Coq, including Post's theorem and the halting problem, but they have a much greater focus on constructive mathematics and the exploration of choice principles such as Markov's principle. As Lean is not as focused on constructive type theory, we have instead chosen to focus on getting these core results with a minimum of fuss and with the most externally useful developments, so that they can perform well as an addition to Lean's standard library.
\item In ``Typing Total Recursive Functions in Coq'' \cite{larchey2017}, Larchey-Wendling shows that all total recursive functions have function witnesses in Coq. From the point of view of our paper, at least concerning total recursive functions in the sense used in computability theory, this is a consequence of the definition - a computable function has a function witness by definition, as it is a predicate on functions. Similarly, we can evaluate a partial recursive function when it is defined because of the definition of $\Part\;\alpha=\Sigma p,p\to \alpha$. The content of the theorem is then shifted to the construction of the function $\mathsf{fix}$, which was not detailed here but reduces to $\mathsf{nat.find}:(\exists n:\N.\;P(n))\to\{n\mid P(n)\}$, which ultimately relies on the same subsingleton elimination principle used in Coq.
\item In ``Formalization of the Undecidability of the Halting Problem for a Functional Language'' by Ramos et. al. \cite{ramos2018}, the authors formalize a simplified version of PVS called PVS0 suitable for translating regular PVS definitions into PVS0 and proving termination, and they also do some computability theory in this setting, including the fixed point theorem and Rice's theorem using an explicit PVS0 program. Our approach is much more abstract and generic, more suited to the mathematical theory than concrete execution models.
\end{itemize}

From our own work and the work in these alternative formalizations, we find the following ``take-home messages'':
\begin{itemize}
\item Although the standard formulation of $\mu$-recursive functions uses $n$-ary functions, and both \cite{zammit1997} and \cite{larchey2017} use $n$-ary $\mu$-recursive functions, it turns out that it is much simpler to work with unary $\mu$-recursive functions and rely on the pairing function to get additional arguments. This simplifies the statement of composition and projection significantly and decreases the reliance on dependent types.
\item There is not a significant difference between our formulation of partial recursive functions and the lambda calculus with de Bruijn variables, although we don't get the higher-order features until fairly late in the process. (Once we have \eval\ and \code\ we can use codes as higher order functions.) But it is less obvious how to get primitive recursion in the lambda calculus, and having a direct enumeration of all sets under consideration makes it easy to get things like \textsf{option.map} as primitive recursive functions early on.
\item Building ``synthetic computability'' \cite{forster2019synthetic} into the types from the beginning makes it obvious that all computable functions are Lean-computable and all partial recursive functions can be evaluated on their domain. All the work is transferred to the single function $\mathsf{fix}$, whose definition is independent of the computability library, and a complicated induction on partial recursive functions is avoided.
\item Synthetic computability is convenient when applicable, but in the presence of a ``proper'' definition of computability, they are incompatible. It is not possible to prove that all synthetically computable functions (that is, all functions) are computable, and this statement is disprovable in classical logic, so we cannot assume it to be the case. (In fact, there is a diagonalization problem here as well; even in no-axioms Lean, we cannot take the assumption that all functions are computable as an axiom without making the axiom false.)
\end{itemize}
\section{Future Work}\label{sec:future}
\subsection{Equivalences}
The most obvious next step is to show the equivalence of other formulations of computable functions: Turing machines, $\lambda$-calculus, Minsky register machines, C... the space of options is very wide here and it is easy to get carried away. Furthermore, if one holds to the thesis that partial recursive functions are the quickest lifeline out of the Turing tarpit, then one must acknowledge that this is to jump right back in, where the hardest part of the translation is fiddling with the intricacies of the target language. We are still looking for ways to do this in a more abstract way that avoids the pain. Forster and Larchey-Wendling \cite{forster2019, larchey2019} have recently made some progress in this direction, connecting Turing machines to Minsky register machines and Diophantine equations.

\subsection{Complexity theory}

This project was in part intended to set up the foundations of complexity theory. One of the often stated reasons for choosing Turing machines over other models of computation like primitive recursion is because they have a better time model. We would argue that this is not true at fine grained notions of complexity, because there is often a linear multiplicative overhead for running across the tape compared to memory models. Moreover, in the other direction we find that, at least in the case of polynomial time complexity, there are methods such as bounded recursion on notation \cite{hofmann2000} that generalize  primitive recursion methods to the definition of polynomial time computable functions, which can be used to define ${\bf P}, {\bf NP}$, and $\bf NP$-hardness at least; we are hopeful that these methods can extend to other classes, possibly by hybridizing with other models of computation as well.


\bibliography{references}

\end{document}